\newtheorem{proposition}{Proposition}
\newtheorem{claim}{Claim}
\title{Virtual Multi-Topology Routing for QoS Constraints}
\author{Nicolas Huin\IEEEauthorrefmark{1},  Sébastien Martin\IEEEauthorrefmark{2}, Jérémie Leguay\IEEEauthorrefmark{2}\\
    \IEEEauthorblockA{\IEEEauthorrefmark{1}\textit{IMT Atlantique, IRISA UMR CNRS 6074, F-35700 Rennes, France}}
    \IEEEauthorblockA{\IEEEauthorrefmark{2}\textit{Huawei Technologies Ltd., Paris Research Center, France.}}
}
\begin{document}

\maketitle

\begin{abstract}
Multi-topology routing (MTR) provides an attractive alternative to segment routing for traffic engineering when network devices cannot be upgraded. However, due to a high overhead in terms of link state messages exchanged by topologies and the need to frequently update link weights to follow evolving network conditions, MTR is often limited to a small number of topologies and the satisfaction of loose QoS constraints. To overcome these limitations we propose vMTR, an MTR extension where demands are routed over virtual topologies that are silent, i.e., they do not exchange LSA messages, and that are continuously derived from a very limited set of real topologies, optimizing each a QoS parameter. In this context, we present a polynomial and exact algorithm for vMTR and, as a benchmark, a local search algorithm for MTR. We show that vMTR helps reducing drastically the number of real topologies and that it is more robust to QoS changes.
\end{abstract}

\begin{IEEEkeywords}
    Multi-Topology Routing, Traffic Engineering, Quality of Service, Segment Routing.
\end{IEEEkeywords}

\section{Introduction}

Traffic Engineering (TE) policies~\cite{Wang:TE} are key to controlling how bandwidth is shared in IP (Internet Protocol) networks.
These policies can be tuned to satisfy application requirements in terms of end-to-end delay, packet loss, and jitter, for instance.
They can be optimized by a centralized controller and applied by routers to steer traffic with data plane mechanisms such as Segment Routing (SR)~\cite{ventre2020segment, wu2022comprehensive} or MPLS (Multi-Protocol Label Switching).
However, most IP networks today still rely on shortest path and hop-by-hop routing from Interior Gateway Protocols (IGP).

Extensions of IGP protocols such as Multi-Topology Routing (MTR)~\cite{psenak2007multi, przygienda2008m} or FlexAlgo~\cite{Flexalgo} provide simple solutions to create and operate multiple routing planes, or "soft" slices~\cite{huin2021network}, to support various QoS and operational requirements.
Each routing plane minimizes a different additive link metric (e.g., IGP cost, TE metric, QoS) and can exclude some links when needed (i.e., \textit{constraints} in FlexAlgo or \textit{exclusions} in MTR).
In theory, MTR and FlexAlgo only require a software upgrade of devices to update the control plane, and they can still rely on standard IP forwarding in the data plane.
However, in practice, this is only the case for MTR, as FlexAlgo is always shipped along with SRv6 or SR-MPLS\@.
Indeed, FlexAlgo's implementations rely on SR labels to identify routing planes and forward traffic consistently on top of them.
When SR is available, TE policies can be configured at ingress nodes to provide source-based routing on top of existing IGP routing planes (or ``algos'' when FlexAlgo is rolled out).
SR policies consist of a label stack that is injected in packet headers to encode the sequence of shortest-path segments that need to be traversed by packets.
In this case, a data plane upgrade, i.e., device replacement or microcode update whenever possible, is necessary to support SR forwarding. Indeed, many legacy devices cannot support SR, as mentioned in~\cite{tian2020traffic}.

In this paper, we consider deployment scenarios where most devices still operate with legacy data planes (i.e., without SR support).
In this context, IGP extensions like MTR can be used to create multiple routing planes, but most routers have not been replaced by SR-capable devices.
The most popular IGP protocol is Open Shortest Path First (OSPF)~\cite{moy1998ospf}.
In this protocol, routers periodically broadcast Link State Advertisement (LSA) messages that contain metrics related to their adjacent links.
Upon reception of LSA messages, routers store them in an LSDB (Link State DataBase) and maintain a Shortest Path Tree (SPT) towards all the other nodes, which decides the next hop to use for all possible destinations.
This information is stored in the data plane in a Forwarding Information Base (FIB) and used to forward packets.
When only one link metric is used (e.g., IGP cost), a single routing plane is available.
The set of link metrics is generally \textit{designed}~\cite{fortz2000internet} to satisfy some traffic scenarios.
However, with a single routing plane for all applications, it is, in general, not possible to satisfy a heterogeneous set of QoS requirements.

To increase routing options, MTR offers the possibility of executing multiple IGP instances in parallel, each working over a different metric.
The SPT maintained for each instance, or topology, generally aims at satisfying a set of QoS requirements (e.g., cost, hop count, delay) and administrative constraints (e.g., exclusion of links or nodes).
FlexAlgo~\cite{Flexalgo} can be seen as an extension of MTR and adds the possibility to build a routing plane over 1) a given metric, 2) considering some \textit{constraints} (i.e., exclusions of nodes/links), and 3) a particular shortest path algorithm.
Each \textit{algo} is defined by the tuple (metric, constraints, SPT algorithm).
Each MTR topology, or ``algo'' in FlexAlgo, has its own IP prefix so that traffic can be forwarded consistently in a hop-by-hop fashion. In both cases, link metrics also need to be periodically advertised using LSAs so that routers can update their TE database.
To meet a variety of QoS constraints from applications, a large number of MTR topologies may be required to satisfy all application requirements.
However, multiplying the sets of TE metrics, i.e., MTR topologies, yields a significant network overhead as LSA messages are sent for each metric.
On top of that, as network conditions fluctuate over time, topologies may require frequent updates from the management system to follow the evolution of traffic and network performance.
However, link weights can only be updated at a slow pace, and they cannot follow the evolution of network quality over time.

To overcome this issue, we propose a new scalable and versatile IGP extension, called virtual MTR (vMTR), to dynamically maintain a set of \textit{virtual routing topologies}.
This solution aims at increasing routing flexibility in multi-topology routing to follow evolving network conditions and support heterogeneous QoS requirements while reducing the overhead from LSAs and weight updates.
The main idea is to route flows over virtual topologies that are silent, i.e., they do not exchange LSA messages, and that are continuously derived from a very limited set of \textit{real} MTR topologies, i.e., exchanging LSAs.
Each real topologies computes shortest paths for a basic QoS parameter, e.g. delay or packet loss, while virtual ones are designed to generate paths meeting multiple QoS constraints.

Searching for a shortest path satisfying one or multiple end-to-end constraints such as latency or hop count usually involves algorithms relying on Lagrangian relaxation.
These algorithms aim at finding feasible Lagrangian multipliers to modify link costs so that a shortest path provides a feasible solution. Well-known heuristic algorithms are LARAC~\cite{juttner2001lagrange}, for one constraint, or GEN-LARAC~\cite{xiao2005gen} for multiple constraints.
Inspired by this, we propose to derive virtual topologies from real ones applying a linear combination of their link metrics with designed multipliers. Compared to vanilla MTR, our vMTR solution reduces the overhead as each virtual topology is only defined by a set of multipliers, one per real IGP instance, and virtual topologies do not exchange LSAs. Furthermore, routing is automatically adjusted when basic QoS metrics vary and specific link metrics do not need to be updated when traffic conditions evolve.

In this paper, we first formulate the optimization problem to find optimal multipliers so as to minimize the number of virtual topologies in vMTR\@. We propose a polynomial and exact algorithm that finds the minimum set of Lagrangian multipliers to meet two QoS constraints for a set of demands. Then, as a benchmark, we revisit the IGP weight design problem~\cite{fortz2000internet} to minimize the number of topologies in MTR and derive a heuristic algorithm based on local search to solve the problem. Finally, we compare MTR and vMTR with numerical results and show that the number of real MTR topologies can be drastically reduced with vMTR when accepting all traffic demands. We also observe that vMTR topologies are more robust to QoS changes than MTR ones.

The rest of this paper is structured as follows. Related work is discussed in Sec.~\ref{sec:related}. The system architecture and the problem formulation are introduced in Sec.~\ref{sec:system} and~\ref{sec:problem}.
Sec.~\ref{sec:algos} presents algorithmic solutions. Sec.~\ref{sec:results} describes our numerical results and Sec.~\ref{sec:conclusion} concludes this paper.

\section{Related work}\label{sec:related}

The design of link weights so that traffic requirements can be satisfied is known as the \textit{IGP weight design} problem. This problem is a well-studied NP-hard problem~\cite{bley2008routing}.

\textbf{Single topology routing.} Fortz and Thorup~\cite{fortz2000internet,fortz2002optimizing} have initially proposed a local search algorithm where the main idea is to change a few link weights at each time step. They minimize a piece-wise linear cost function related to MLU (Maximum Link Utilization). Right after, a genetic algorithm has been proposed by Ericsson et al.\cite{ericsson2002genetic} and later combined~\cite{buriol2005hybrid} with the original local search. Bley et al.~\cite{bley2008routing} proposed an improved version of the local search algorithm. A math-heuristic based on column generation has also been proposed recently~\cite{codit2023}.

Other variants related to energy-minimization have been considered~\cite{capone2013ospf,amaldi2013energy}. Robust optimization variants have been studied to consider (weighted) link failures~\cite{fortz2003robust} or worst case variations of demands~\cite{vallet2014online} (i.e., minimum / maximum values for each demand) and hoses~\cite{ranaweera2012preventive} at ingress routers to bound aggregate traffic. Oblivious routing extensions where traffic can be load balanced over multiple paths have also been considered~\cite{altin2012oblivious, Duffield} using the hose model. Other traffic polytopes have been used to model traffic uncertainty~\cite{moulierac2015optimizing}.

Beyond traditional optimization techniques,  machine learning~\cite{kodialam2022network} have recently been used to solve the deterministic version of the problem. It is based on a smoothed formulation of the problem
that is suitable for gradient descent.

\textbf{Multi-topology routing.} In the context of network slicing where end-to-end latency constraints must be met, a global optimization model along with results obtained with a genetic algorithm has been presented~\cite{huin2021network} to design multiple topologies (iteratively) and assign demands to them.
Some papers have also proposed solution where topologies are designed offline to maximize edge-disjointness~\cite{wang2008adaptive, wang2012ample} or path diversity~\cite{mirzamany2014efficient}.
These topologies are then used to route traffic based on actual QoS conditions.
Other variants for MTR have been studied to consider robustness to traffic variations~\cite{wang2009robust} and failure scenarios~\cite{1563878}.

\textbf{Our contributions.} In this paper, we extend MTR with virtual topologies that are silent, i.e., they do not exchange LSA messages, and that are continuously derived from a very limited set of real IGP topologies, i.e., exchanging LSAs.

\section{MTR and vMTR for QoS Constraints}\label{sec:system}

This section introduces MTR, and the vMTR extension we propose, from a system perspective. As a preliminary step, we also first formulate the QoS routing problem they address.

\begin{table}[t]
    \centering
    \begin{tabular}{|p{0.1\columnwidth}|p{0.80\columnwidth}|}
        \hline
        {\it Symbol} & {\it Meaning}\\
        \hline
        $K$ & Set of demands\\
        $s_k$, $d_k$ & Source and destination for demand $k \in K$\\
        $r_a^t$  & Resource on arc $a \in A$ for topology $t$\\
        $T_Q$ & Set of \textit{basic} topologies (each related to a different QoS metric)\\
        $\DesignedTopology$ & Set of \textit{designed} topologies\\
        $r_k^t$ &  Max resource consumption for demand $k$ over topology $t$\\
        $SPT^t$ & Shortest path tree on topology $t$ \\
        $\lambda_t$ & Vector of $|T_Q|$ multipliers to design vMTR topology $t$\\

        \hline
    \end{tabular}\vspace{3mm}
    \caption{List of key notations.}\label{notation}
\end{table}

\subsection{QoS satisfaction problem}

Let's represent a network as a directed graph $G=(\NodeSet,\ArcSet)$ where $\NodeSet$ is the set of nodes (i.e., routers) and $\ArcSet$ is the set of arcs (i.e., links).
A set of basic IGP topologies $\BasicTopoSet$ is associated with different QoS metrics such as delay, jitter or losses.
On top of them, additional topologies, real for MTR or virtual for vMTR, can be designed to route demands that cannot be accepted on basic topologies.
In this case, these topologies defined by the set of their link weights, also called resources, denoted $\ResourceArcTopo{\Topo}$ for arc $\Arc \in \ArcSet$ and topology $\Topo \in \BasicTopoSet$, need to be designed along with their mapping to demands.
The set of traffic demands $\DmdSet$ is given as input.
Each demand $\Dmd$ goes from source node $\SrcDmd$ to destination node $\DestDmd$, and it has a set of QoS requirements that are defined by a maximum resource consumption $r_k^t$ over basic (additive) QoS metrics from topology $t \in \BasicTopoSet$.
Note that packet loss can be made additive using the log function.
The problem is to find, for every demand $\Dmd$, a routing path $\PathDmd$ satisfying a set of QoS constraints, as expressed as follows:

\begin{align}
    \sum_{\Arc \in \PathDmd} \ResourceArcTopo{\Topo} \leq \ResourceDmdTopo{\Topo} &\quad&  \forall \Topo \in \BasicTopoSet
\end{align}

We consider a practical use case scenario where the total amount of  traffic with QoS constraints is much smaller than best effort traffic. QoS constrained traffic is prioritized as high priority traffic in the data plane and the sum of high priority traffic (i.e., QoS constrained traffic) is not enough to saturate the bottleneck capacity. For this reason, we omit capacity constraints in the design of topologies and we focus only on QoS constraints satisfaction.

\subsection{MTR\@: Multi-Topology Routing}

When relying on multi-topology routing, the goal is to design the minimum number of topologies so that all demands have their QoS requirements satisfied. As mentioned in Sec.~\ref{sec:related}, this is typically realized by solving a series of inverse shortest path trees problems. When link weights are designed and deployed, traffic is routed over a set of $\DesignedTopology$  topologies, each one corresponding to a new IGP instance. Routers exchange periodically LSA messages to propagate weights and they maintain a minimum spanning tree  $SPT^\Topo$ for every $t \in \DesignedTopology$. Whenever network conditions evolve and the characteristic of links in terms of QoS changes, a significant update of the designed topologies, i.e. weights, may be required.

\subsection{vMTR\@: virtual MTR}

In the case of vMTR, additional virtual topologies need to be designed from the set of existing real topologies.
The central idea behind vMTR is rooted in Lagrangian relaxation, similarly to the LARAC~\cite{juttner2001lagrange} or GEN-LARAC~\cite{xiao2005gen} algorithms that find a path satisfying one or multiple QoS constraints, respectively.
Indeed, virtual topologies are derived from real ones using a set of Lagrangian multipliers.
A set of multipliers, each one associated with a real topology, is used to calculate and maintain link weights in a virtual topology. In this way, virtual topologies are automatically updated when QoS metrics on real instances change. In addition, virtual topologies are completely silent and do not exchange LSAs.

For instance, a virtual topology can be created from two real topologies on loss and delay so that (virtual) link weights are defined by $\lambda^l$.
Loss + $\lambda^d$. Delay with a vector $\lambda$ of multipliers associated to the topology. Virtual link weights are used by the shortest path tree algorithm inside routers to find constrained paths with end-to-end loss and delay constraints. Recall that, in practice, packet loss is made additive using the log function. In the general case, metrics from real topologies can be combined with any algebraic operation. However, without loss of generality, we will consider in the rest of the paper a linear combination. Considering other operations may help to further increase routing diversity and constraint satisfaction, but at the cost of some extra complexity. Indeed, Lagrangian algorithms are actually heuristic to solve the constrained shortest path problem and may not find feasible solutions when they exist (see Sec.~\ref{sec:results}). However, by design, vMTR based on a linear combination is guaranteed to route all demands for which paths can be found with LARAC.

\begin{figure}[t]
    \centering
    \includegraphics[width=\columnwidth]{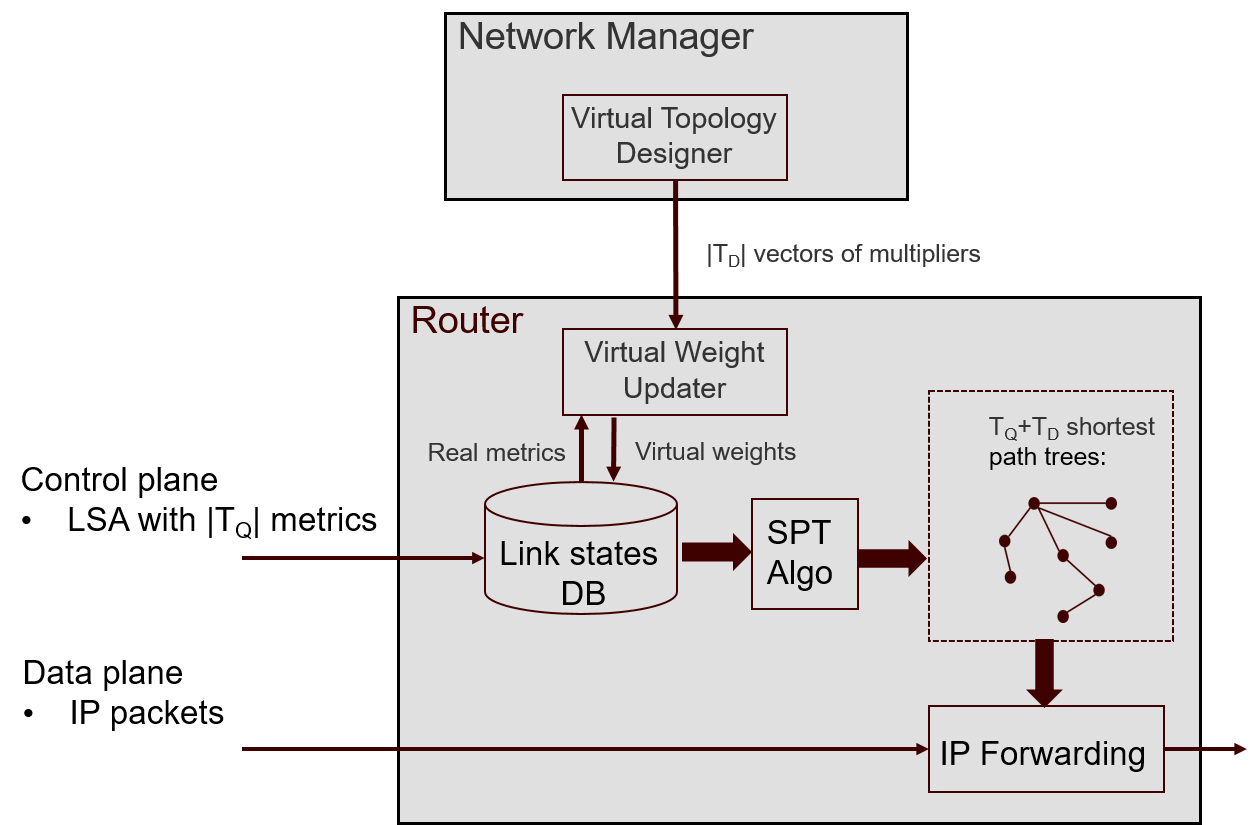}
    \caption{System overview for vMTR.}\label{archi}
    \vspace{-0.4cm}
\end{figure}

To create and manage virtual topologies, a virtual weight designer module sits in the network controller, as depicted in Fig.~\ref{archi}. It takes as inputs the network topology, link metrics related to “real” IGP instances (e.g., cost, capacity, delay), a traffic matrix with a set of demands and their attributes (e.g., source and, destination, QoS requirements). It produces a set of $|\DesignedTopology|$ vectors of multipliers, one for every virtual topologies, along with the mapping of demands to each topology. In practice it sends to all routers $|\DesignedTopology|$ vectors to create $|\DesignedTopology|$ virtual topologies, each vector being of size $|T_Q|$, i.e., the number of existing / real IGP topologies.
Every router continuously maintains a LSDB  with the set of virtual link weights derived from real metrics.
As virtual topologies are based on real ones, no specific LSAs need to be advertised. These topologies are almost immediately available to route traffic as routers do not need to wait for LSA exchanges. When metrics in real topologies are updated, all the virtual weights and their associated shortest path trees are updated locally by devices. Traditional Dynamic SPT~\cite{garg2021dynamizing} mechanisms implemented in IGPs can be used to stabilize the system in case of frequent updates. No routing loops can occur as multipliers and virtual metrics are positive. Unlike link weights in real topologies that need to be exchanged with LSAs, multipliers and the resulting virtual weights do not need to be integer.
As for MTR, packets can be routed over a specific topology  using QoS header fields like DSCP or ToS.

In terms of benefits and impact, one advantage of vMRT is the capability to explain a virtual topology as it is a combination of basic topologies related to metrics that users understand. This is impossible with MTR when weights are optimized by an algorithm. Actually, vMTR offers the possibility of manually or automatically setting multipliers to define a good tradeoff among basic topologies. An expert or customer can select the best multipliers according to his requirements. With regards to FlexAlgo, vMTR can be seen as a way to create new topologies (i.e., new metrics) over which "algos" are applied. We could even imagine an extension to the standard where multipliers are part of the algorithm definition and can be use to tune "algos" to meet QoS constraints. However, in this case, SR has to be rolled out.

Our proposition is in this paper two-fold: vMTR provides a way to create virtual topologies using multipliers and, on top of that, we propose in Sec.~\ref{sec:MTRalg} an algorithm to automatically generate the said multipliers.

\section{Topology design problems}
\label{sec:problem}

The optimization problem associated with the design of topologies for MTR, defined by link weights, consists in creating the minimum number of topologies such that all demands are accepted. This problem is similar in vMTR but differs in the way weights are computed. Indeed, weights of virtual topologies are derived from real topologies. This section presents the ILP (Integer Linear Programming) models for both cases.

In the rest of the paper, we will assume that MTR and vMTR are used to route demands that cannot be accepted over basic topologies $|T_Q|$ (each one maintaining a shortest path tree over a single QoS metric). Also, to ease the presentation, we will omit demands for which a constrained path cannot be found.

\begin{figure*}[t]
    \begin{align}
        \min \quad &\sum_{t\in \DesignedTopology} z_t   \label{objMRT}\\
                   & \sum_{k\in K} y_{k,t} \leq  z_t & & \forall t\in \DesignedTopology\label{ct:topo}\\
                   & \sum_{t\in \DesignedTopology} y_{k,t} \geq  1 & & \forall k\in K\label{ct:selectTopoMTR}\\
                   & \sum_{a\in \omega^+(v)} x_{a,k,t} -\sum_{a\in \omega^-(v)} x_{a,k,t} =  \begin{cases}
                       y_{k,t} \text{ if } v = s_k, \\
                       -y_{k,t}  \text{ if } v = d_k, \\
                       0 \text{ otherwise;}
                   \end{cases}&\quad& \forall v\in V, \forall k\in K, \forall t\in \DesignedTopology \label{ct:routingMTR}\\
                   &  x_{a,k,t} \leq u_{a,d_k,t} &\quad& \forall a\in A, \forall k\in K, \forall t\in \DesignedTopology              \label{ct:xtouMTR}\\
                   &  \sum_{a\in \omega^+(v)} u_{a,d,t}\leq 1 &\quad& \forall v\in V, \forall d\in d_K, \forall t\in \DesignedTopology                       \label{ct:treeIncidentMTR}\\
                   &  u_{a,d,t} \leq \sum_{k \in K: d=d_k} x_{a,d_k,t} &\quad& \forall a\in A, \forall d\in d_K, \forall t\in \DesignedTopology
                   \label{ct:utoxMTR}\\
                   &  w_{(u,v),t} - \pi_{u,d,t} +\pi_{v,d,t} \geq 1 - u_{(u,v),d,t} &\quad& \forall (u,v)\in A, \forall d\in d_K, \forall t\in \DesignedTopology \label{ct:wgeqMTR}\\
                   &  w_{(u,v),t} - \pi_{u,d,t} +\pi_{v,d,t} \leq M(1 - u_{(u,v),d,t}) &\quad& \forall (u,v)\in A, \forall d\in d_K, \forall t\in \DesignedTopology\label{ct:wleqMTR}\\
                   & \sum_{a\in A}\sum_{t\in \DesignedTopology} r_a^t  x_{a,k,t} \leq r_k^t &\quad& \forall k\in K, \forall t\in T_Q \label{ct:ResourceMTR}
    \end{align}
           \vspace{-0.2cm}
    \caption{Integer linear problem for the design of topologies in MTR.}\label{MTRmodel}
        \vspace{-0.3cm}
\end{figure*}

\subsection{MTR}

Let us introduce the variables to solve the topology design problem for MTR\@:

\begin{itemize}
    \item $x_{a,k,t}\in \{0,1\}$: equals 1 if the demand $k$ is routed along a path using link $a$ in the topology $t$, and 0 otherwise.
    \item $\pi_{u,v,t}\in \mathbb{R}^+$: is the potential of the node $u$, i.e., the distance between node $u$ and $v$ on the topology $t$.
    \item $u_{a,d,t}\in \{0,1\}$: equals 1 if the link $a$ belongs to a shortest path towards destination $d$ on topology $t$, and 0 otherwise.
    \item $w_{a,t}\in \mathbb{R}^+$: the weight assigned to link $a\in A$ on the topology $t$.
    \item $y_{k,t}\in \{0,1\}$: equals 1 if the demand $k$ is assigned to the topology $t$ (thus feasible for this topology), and 0 otherwise.
    \item $z_t\in \{0,1\}$: equals 1 if the topology $t$ is deployed.
\end{itemize}

Let us consider $d_K$ the set of nodes that are destinations of at least one demand. We denote by $\omega^-(\NodeIdx)$ (resp. $\omega^+(\NodeIdx)$) the set of ingoing (resp. outgoing) links of node $\NodeIdx$.

As we do not know the needed number of topologies, we can consider an upper bound to create the right number of variables. This upper bound can be set by the number of designed topologies returned by the local search algorithm presented in the next section.

The ILP model is given by Fig.~\ref{MTRmodel}.
This model is an extension of the well-known model from the literature (e.g.,~\cite{wang2009robust, huin2021network}) to jointly create multiple topologies.
The main goal is to minimize the number of designed topologies.
Constraints \eqref{ct:topo} link the activation of topologies and demands routed on top of them.
Inequalities \eqref{ct:selectTopoMTR} ensure that each demand is assigned to one designed topology.
Equalities \eqref{ct:routingMTR} are the well-known flow conservation constraints.
Constraints \eqref{ct:xtouMTR} and \eqref{ct:utoxMTR} link routing variables and shortest path variables.
Inequalities \eqref{ct:treeIncidentMTR} guarantee that the induced shortest path tree is a tree, i.e., there is at most one outgoing link at each node.
Constraints \eqref{ct:wgeqMTR} and \eqref{ct:wleqMTR} ensure that the shortest path tree follows the weight $w$.
Inequalities \eqref{ct:ResourceMTR} are resource constraints.

\subsection{vMTR}

For vMTR we need to add the following variables:
\begin{itemize}
    \item $\lambda_{t,\bar{t}}\in \mathbb{R}$: the multiplier associated with the basic topology $t\in T_Q$ and the virtual topology $\bar{t} \in \DesignedTopology$.
\end{itemize}

and the following constraints:

\begin{align}
   & w_{a,\bar{t}}=\sum_{t\in T_Q} \lambda_{t,\bar{t}} w_{a,t} &\quad& \forall a\in A, \forall \bar{t}\in \DesignedTopology \label{ct:ResourceMTR}
\end{align}

Remark that in the case of vMTR, we complete the designed virtual topologies with real MTR topologies when all demands cannot be accepted with vMTR. To manage the two kinds of designed topologies, we penalize in the objective function the activation of MTR topologies.

\section{Algorithmic solutions}\label{sec:algos}

This section presents algorithms for MTR and vMTR\@. While multiple algorithms exist for MTR (see Sec.~\ref{sec:related}), we build upon one of them~\cite{bley2008routing} to focus on QoS constraint satisfaction and minimize the number of designed topologies.

\subsection{MTR Algorithm}\label{sec:MTRalg}

\begin{algorithm}[t]
\small
    \caption{Greedy algorithm for MTR}\label{alg:greedy}
    \begin{algorithmic}[1]
        \Function{greedy}{$G = (\NodeSet, \LinkSet), d, l, K$}
            \State{$\mathcal{T} \gets \emptyset$} \Comment{Set of weight-demands pairs}
            \While{$\DmdSet \neq \emptyset$}
                \State{$w \gets \{\mathcal{U}(0, 65535) | \LinkIdx\in\LinkSet\}$}\Comment{random values}
                \State{$w \gets \Call{localSearch}{G, w, d, l, K} $}
                \State{$K^\prime \gets \Call{getAcceptedDemands}{G, w, d, l, K}$}
                \If{$K^\prime = \emptyset$}
                    \State{$\PathIdx \gets \Call{TAMCRA}{G, d, l, K}$}
                    \State{$w \gets \{\mathcal{U}(0, 65535) | \LinkIdx\in\LinkSet\}$} \Comment{random values}
                    \ForAll{$\LinkIdx\in\PathIdx$}
                        \State{$ w_\LinkIdx \gets 1$}
                        \EndFor{}
                \State{$K^\prime \gets \Call{getAcceptedDemands}{G, w, d, l, K}$}
                \EndIf{}
                \State{$\mathcal{T} \in \mathcal{T}\cup{(w, K^\prime)}$}
                \State{$K \gets K \setminus K^\prime$}
            \EndWhile{}
        \EndFunction{}
        \Function{localSearch}{$G = (\NodeSet, \LinkSet), w, d, l, K$}\label{alg:local-search}
            \State{$\tilde{K}\gets \Call{getAcceptedDemands}{G, w, d, l, K}$}
            \State{$(w^\ast, K^\ast) \gets (w, \Call{getAcceptedDemands}{G, w, d, l, K})$}
            \For{$i \gets 0$ to maxIte}
                \State{$\mathcal{W} \gets \Call{generateNeighborhood}{w, G, d, l}$}
                \State{$w \gets \arg\max_{w\in\mathcal{W}} \Call{getAcceptedDemands}{G, w, d, l}$}
                \State{$\tilde{K} \gets $\Call{getAcceptedDemands}{$G, w, d, l, K$}}
                \If{$\vert\tilde{K}\vert > \vert K^\ast\vert$}
                    \State{$(w^\ast, K^\ast) \gets (w, \tilde{K})$}
                \EndIf
            \EndFor
            \State{\Return{$w^\ast$}}
        \EndFunction
        \Function{generateNeighborhood}{$w, G = (\NodeSet, \LinkSet)$}\label{alg:generate-neighborhood}
            \State{$(\updelta, \downdelta) \gets $\Call{computeDeltaWeights}{G, w}}
            \State{$\mathcal{W} \gets \emptyset$}
            \ForAll{$\LinkIdx\in\LinkSet$}
                \State{$w^-\gets w$}
                \State{$w_\LinkIdx^-\gets w_\LinkIdx + \downdelta(\LinkIdx$)}
                \State{$w^+\gets w$}
                \State{$w_\LinkIdx^+\gets w_\LinkIdx + \updelta(\LinkIdx)$}
                \State{$\mathcal{W}\gets\mathcal{W}\cup\{w^+, w^-\}$}
            \EndFor
            \State{\Return{$\mathcal{W}$}}
        \EndFunction
        \Function{computeDeltaWeights}{$G = (\NodeSet, \LinkSet), w$}\label{alg:compute-delta-weights}
            \State{$\updelta \gets \{\infty | \LinkIdx\in\LinkSet \}$}
            \State{$\downdelta \gets \{\infty | \LinkIdx\in\LinkSet \}$}
            \ForAll{$\NodeIdx\in\NodeSet$}
                \State{$T, d \gets\Call{dijkstraShortestPathTree}{G, \NodeIdx, w}$} \Comment{Get shortest path tree with distances }
                \ForAll{$(s_\LinkIdx, t_\LinkIdx)\in\LinkSet\setminus T$}\label{alg:compute-delta-weights:propagate}
                    \State{$\NodeIdx_A \gets \Call{LowestCommonAncestor}{T, s_\LinkIdx, t_\LinkIdx}$}
                    \ForAll{$\LinkIdx^\prime\in\PathIdx_{T(\NodeIdx_A, s_\LinkIdx)}$}
                        \State{$\downdelta(\LinkIdx^\prime)\gets \min(\downdelta(\LinkIdx^\prime), d(s_\LinkIdx) + w(\LinkIdx) - d(t_\LinkIdx))$}
                    \EndFor{}
                    \ForAll{$\LinkIdx^\prime\in\PathIdx_{T(\NodeIdx_A, t_\LinkIdx)}$}
                        \State{$\updelta(\LinkIdx^\prime)\gets \min(\updelta(\LinkIdx^\prime), d(s_\LinkIdx) + w(\LinkIdx) - d(t_\LinkIdx))$}
                    \EndFor{}
                \EndFor{}
            \EndFor
        \EndFunction
    \end{algorithmic}
\end{algorithm}

We developed a greedy algorithm to design the sets of weights, i.e., topologies, as shown in Algo.~\ref{alg:greedy}.
It iteratively generates random weight vectors, fine-tuned to maximize the number of accepted demands using a local search algorithm. MTR topologies are created one by one, each time trying to accept a maximum number of demands.
If the local search fails to generate a weight vector accepting any demands, it falls back to computing a valid resource-constrained shortest path for one of the remaining demands (using the TAMCRA algorithm~\cite{de2000tamcra}), then uses that path for building a valid weight vector from a new random one.

The greedy algorithm uses a run-of-the-mill local search that encodes a solution by its weight vector, and the neighborhood of a solution is the set of weight vectors that differ from the solution by a single element.
The fitness function used to evaluate a solution computes the shortest path tree according to the weight vector and counts the number of demands with a path respecting their required QoS.

Bley et al.~\cite{bley2008routing} already proposed a local search algorithm for the IGP weight design problem, but we consider different assumptions.
Firstly, while they consider an extended neighborhood (compared to ours) that takes into account load balancing, we only focus on the weight vector variation.
Secondly, while they consider weights in the range $[0, 20]$, we consider the whole range of possible OSPF weights ($[0, 2^{16}]$)
The last point is particularly important since the definition of their neighborhood considers atomic changes in the weight ($\pm 1$), which is quite effective considering the range of the weights.
However, since we allow a much larger range of weights, we need to define a neighborhood that guarantees changes in the shortest path trees induced by the weights.

We thus generate meaningful neighbors (see line~\ref{alg:generate-neighborhood}) computing the minimum increase or decrease in weight needed to change the shortest path trees.
We call these values the up and the down delta-weight of a link, respectively. We denote by $\pi(u,v)$ the distance of the shortest path from $u$ to $v$ with the current link weights.

First, we define the down delta-weight $\downdelta_{\NodeIdx}(\LinkIdx)$ of a link $\LinkIdx=(s_\LinkIdx, d_\LinkIdx)$ as the weight decrease needed to insert the link $\LinkIdx$ in the current shortest path tree $T_{\NodeIdx}$, rooted at ${\NodeIdx}$, in the place of the link $\LinkIdx^\prime = (s_\LinkIdx^\prime, d_\LinkIdx)$.
Formally, we are looking for a value $\downdelta_\NodeIdx(\LinkIdx)$ such that
\[
    \pi(\NodeIdx, s_\LinkIdx) + w_{(\LinkIdx)} - \downdelta_\NodeIdx(\LinkIdx) < \pi(\NodeIdx, t_\LinkIdx),
\]
\[
    \downdelta_\NodeIdx(\LinkIdx) = \pi(\NodeIdx, s_\LinkIdx) + w_{(\LinkIdx)} - \pi(\NodeIdx, t_\LinkIdx)
\]
Second, we define the up delta-weight $\updelta_\NodeIdx(\LinkIdx)$ of a link $\LinkIdx=(s_\LinkIdx, d_\LinkIdx)$ as the weight increase needed to remove the link from the current shortest path tree $SPT_\NodeIdx$, rooted at $\NodeIdx$.
Formally, we are looking for a value $\updelta_\NodeIdx(\LinkIdx)$ such that
\def\OtherLinkIdx{\LinkIdx^\prime}
\[
    \pi(\NodeIdx, s_\LinkIdx) + w_{(\LinkIdx)} + \updelta_\NodeIdx(\LinkIdx) > \min_{\OtherLinkIdx \in \omega^-(d_\LinkIdx) \setminus \{\LinkIdx\}} (\pi(\NodeIdx, s_{\OtherLinkIdx}) + w_{(\OtherLinkIdx)}).
\]
Thus, the up delta weight of a link $\LinkIdx$ is given by
\begin{align}
    \updelta_\NodeIdx(\LinkIdx) & = \min_{\OtherLinkIdx \in \omega^-(d_\LinkIdx) \setminus \{\LinkIdx\}} (\pi(\NodeIdx, s_{\OtherLinkIdx}) + w_{(\OtherLinkIdx)}) - \pi(\NodeIdx, s_\LinkIdx) - w_{(\LinkIdx)}\\
                                & = \min_{\OtherLinkIdx \in \omega^-(d_\LinkIdx) \setminus \{\LinkIdx\}} \downdelta_\NodeIdx(\OtherLinkIdx)
\end{align}

Once we computed the delta-weight of each link, we can propagate them (see line~\ref{alg:compute-delta-weights:propagate}) up the tree.
Indeed, any weight change occurring in the higher depth of the tree will impact the distance of the descendants.
Given a link $\LinkIdx_1 = (s_1, d_{\LinkIdx_1})$ and its swapped link $\LinkIdx_2 = (s_2, d_{\LinkIdx_1})$, we can propagate their up and down delta-weight up to their lowest common ancestor in the tree.
Formally, if $u$ is the lowest common ancestor of $s_{\LinkIdx_1}$ and $s_{\LinkIdx_2}$ in $SPT$, and $p_T(\NodeIdx_1, \NodeIdx_2)$ is the path from $\NodeIdx_1$ to $\NodeIdx_2$ in $SPT$, we have the following properties:
\[
    \forall  \LinkIdx \in p_T(u, s_1), \updelta(\LinkIdx) = \updelta(\LinkIdx_1),
\]
\[
    \forall\LinkIdx \in p_T(u, s_2), \downdelta(\LinkIdx) = \downdelta(\LinkIdx_2)
\]

Finally, we need to consider all possible shortest path trees in the graph.
We can combine their up and down delta-weight by taking the minimum over all trees.

\subsection{vMTR Algorithm}

The weight design problem for vMTR consists in maximizing the number of accepted demands (first objective) while minimizing the number of virtual topologies (second objective).
To create a virtual topology with vMTR over 2 basic topologies (e.g., delay and loss), we consider a single multiplier $\lambda$ that provides the following virtual weight $r_a^t = r_a^1+\lambda r_a^2$ for each link $a\in A$. A multiplier $\lambda$ is said to be feasible for demand $k$ if the shortest path $p$ over link metrics $r_a^t$ satisfies the maximum constraint on the end-to-end consumption for both resources, i.e., on $r_k^1$ and $r_k^2$.

In the following, we first present some properties for this particular case with 2 resources and then, we propose an exact algorithm that solves the problem in polynomial time.

\textbf{Lagrangian properties.} As one multiplier can cover several demands, a natural question is: does there exist one or several intervals of feasible multipliers for a given demand? In this section, we provide an answer to this question that allows solving the vMTR problem in polynomial time afterward.

Let us introduce the Constraint Shortest Path problem (CSP) where the goal is to find a path minimizing a resource under an additive end-to-end resource constraint.
In \cite{doi:10.1080/09728600.2005.12088797}, the authors propose and analyze the LARAC algorithm, a polynomial time heuristic solution for the CSP problem based on  Lagrangian relaxation.
This algorithm computes a multiplier $\lambda^*$ and the solution it returns is the shortest path $p_{\lambda^*}$ in the graph where the cost of each link $a\in A$ is given by $r_a^1+\lambda^* r_a^2$.

The most interesting claim proved in \cite{doi:10.1080/09728600.2005.12088797} for us is given below.

\begin{claim}\label{claim1}
    \cite{doi:10.1080/09728600.2005.12088797} If $\lambda<\lambda^*$, then $\sum_{a\in p_\lambda} r^2_a \geq r^2_k$ and if $\lambda>\lambda^*$, then $\sum_{a\in p_\lambda} r^2_a \leq r^2$
\end{claim}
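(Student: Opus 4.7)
The plan is a two-step argument: first prove that $\sum_{a\in p_\lambda} r^2_a$ is non-increasing in $\lambda$, then use the specific characterization of $\lambda^*$ produced by LARAC to pin the monotone transition to the threshold $r^2_k$.

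For the monotonicity step, I would pick any $\lambda_1<\lambda_2$ and exploit the fact that, by definition, $p_{\lambda_i}$ is a shortest path under the combined metric $r^1_a+\lambda_i r^2_a$. Writing the two Lagrangian-optimality inequalities
\begin{align*}
\sum_{a\in p_{\lambda_1}}\!\!r^1_a+\lambda_1\!\!\sum_{a\in p_{\lambda_1}}\!\!r^2_a &\leq \sum_{a\in p_{\lambda_2}}\!\!r^1_a+\lambda_1\!\!\sum_{a\in p_{\lambda_2}}\!\!r^2_a,\\
\sum_{a\in p_{\lambda_2}}\!\!r^1_a+\lambda_2\!\!\sum_{a\in p_{\lambda_2}}\!\!r^2_a &\leq \sum_{a\in p_{\lambda_1}}\!\!r^1_a+\lambda_2\!\!\sum_{a\in p_{\lambda_1}}\!\!r^2_a,
\end{align*}
summing them and cancelling the $r^1$ terms yields $(\lambda_2-\lambda_1)\bigl(\sum_{a\in p_{\lambda_2}} r^2_a-\sum_{a\in p_{\lambda_1}} r^2_a\bigr)\leq 0$. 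Since $\lambda_2-\lambda_1>0$, this gives $\sum_{a\in p_{\lambda_2}} r^2_a\leq \sum_{a\in p_{\lambda_1}} r^2_a$, establishing monotonicity of the resource-2 consumption.

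The second step invokes the LARAC characterization of $\lambda^*$ cited from the reference. LARAC terminates at a multiplier $\lambda^*$ at which two shortest paths of equal Lagrangian cost coexist: a ``cheapest infeasible'' one $p_c$ (with $\sum_{a\in p_c} r^2_a>r^2_k$) and a ``cheapest feasible'' one $p_d$ (with $\sum_{a\in p_d} r^2_a\leq r^2_k$). For any $\lambda$ strictly below $\lambda^*$, decreasing the weight on $r^2$ makes $p_c$ strictly cheaper in the combined metric than $p_d$, so the Lagrangian optimum $p_\lambda$ inherits $\sum_{a\in p_\lambda} r^2_a\geq \sum_{a\in p_c} r^2_a>r^2_k$; this combined with monotonicity gives the first half of the claim. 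Symmetrically, for $\lambda>\lambda^*$ the feasible branch becomes strictly cheaper, so $\sum_{a\in p_\lambda} r^2_a\leq r^2_k$, giving the second half.

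I expect the main obstacle to be the handling of ties at $\lambda^*$ itself: when several paths achieve the same Lagrangian cost, $p_\lambda$ is only defined up to this ambiguity, so the statement has to be read as ``there exists an optimal Lagrangian path $p_\lambda$ satisfying the inequality'' rather than as a property of a uniquely defined $p_\lambda$. Once this subtlety is addressed (typically by working with the two extremal paths $p_c$ and $p_d$ on either side of $\lambda^*$), the claim reduces to the standard convexity argument on the Lagrangian dual function.
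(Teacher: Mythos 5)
The paper itself gives no proof of this claim: it is quoted verbatim as a result established in the cited LARAC reference, so there is no in-paper argument to compare against. Judged on its own, your proposal is essentially the standard (and correct) argument, and it matches the content of the cited reference: the exchange argument in your first step correctly shows that $\sum_{a\in p_\lambda} r^2_a$ is non-increasing in $\lambda$ for \emph{any} choice of optimal paths at the two multipliers, and the breakpoint characterization of $\lambda^*$ (coexistence at $\lambda^*$ of an infeasible optimal path $p_c$ with $\sum_{a\in p_c} r^2_a > r^2_k$ and a feasible one $p_d$ with $\sum_{a\in p_d} r^2_a \leq r^2_k$) is exactly how the LARAC optimum is characterized. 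One step should be tightened: the sentence ``decreasing the weight on $r^2$ makes $p_c$ strictly cheaper than $p_d$, so $p_\lambda$ inherits $\sum_{a\in p_\lambda} r^2_a \geq \sum_{a\in p_c} r^2_a$'' does not follow from the cheapness comparison alone (ruling out $p_d$ as optimal at $\lambda$ says nothing about the $r^2$-cost of the actual optimum). The clean justification is simply to apply your own monotonicity lemma with $p_c$ taken as the optimal path at $\lambda^*$ (and symmetrically $p_d$ for $\lambda>\lambda^*$); since the exchange inequality holds for any optimizers, this immediately gives $\sum_{a\in p_\lambda} r^2_a \geq \sum_{a\in p_c} r^2_a > r^2_k$ for $\lambda<\lambda^*$, with no extra argument needed. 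You should also note the degenerate case where the $r^1$-shortest path is already feasible ($\lambda^*=0$), in which the first half of the claim is vacuous since multipliers are non-negative; your remark about ties at $\lambda^*$ itself is apt, and the equality cases are precisely why the claim is stated with weak inequalities.
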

This claim is the core of our vMTR algorithm.

We denote by $\lambda^{1*}_k$ (resp.\ $\lambda^{2*}_k$) the multiplier provided by LARAC algorithm when resource 2 (resp. resource 1) is minimized and resource 1 (resp. resource 2) is constrained. As proved in \cite{doi:10.1080/09728600.2005.12088797} this multiplier is optimal.

The following proposition ensures that, for a given demand, only one range of multiplier is feasible for the two resources.

\begin{proposition} \label{prop1}
    For a given demand $k\in K$, a multiplier $\lambda_k$ is feasible for $k$, if  \[\lambda^{2*}_{k}\leq \lambda_k \leq \frac{1}{\lambda^{1*}_{k}}.\]
\end{proposition}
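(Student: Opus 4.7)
The plan is to apply Claim~\ref{claim1} twice, once for each resource, and to combine the two resulting one-sided bounds on $\lambda$ into the claimed two-sided interval. The first application handles resource~2 directly, the second handles resource~1 after a substitution that swaps the roles of the two resources in the Lagrangian cost.

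\textbf{Lower bound (resource 2).} First I would apply Claim~\ref{claim1} with the Lagrangian cost $r_a^1 + \lambda r_a^2$ used by LARAC in the definition of $\lambda^{2*}_k$ (minimize resource~1 subject to the constraint $\sum_{a\in p} r^2_a \le r_k^2$). Claim~\ref{claim1} then says that for $\lambda \ge \lambda^{2*}_k$ the shortest path $p_\lambda$ satisfies $\sum_{a\in p_\lambda} r^2_a \le r_k^2$, i.e.\ the resource-2 constraint holds. This gives the inequality $\lambda \ge \lambda^{2*}_k$ as a sufficient condition for satisfying the second constraint.

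\textbf{Upper bound (resource 1).} For the other direction, the key observation is that the link cost $r_a^2 + \mu r_a^1$ used by LARAC in the definition of $\lambda^{1*}_k$ can be rewritten, for any $\mu > 0$, as $\mu \bigl(r_a^1 + (1/\mu) r_a^2\bigr)$. Since scaling by a positive constant does not change the shortest path, the path produced using our cost $r_a^1 + \lambda r_a^2$ with $\lambda = 1/\mu$ coincides with the path produced by LARAC for the problem ``minimize resource~2 subject to $\sum r^1_a \le r_k^1$''. Applying Claim~\ref{claim1} to that symmetric instance, the resource-1 constraint is satisfied whenever $\mu \ge \lambda^{1*}_k$, which in the $\lambda$ variable reverses to $\lambda \le 1/\lambda^{1*}_k$.

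\textbf{Conclusion and main difficulty.} Combining the two bounds, any $\lambda_k$ with $\lambda^{2*}_k \le \lambda_k \le 1/\lambda^{1*}_k$ yields a shortest path whose resource-1 and resource-2 consumptions both respect their budgets, hence is feasible for demand~$k$. The only delicate point is the variable change $\mu \leftrightarrow 1/\lambda$: one has to keep track of the fact that this transformation is strictly decreasing on $\mathbb{R}_{>0}$, so that the inequality $\mu \ge \lambda^{1*}_k$ flips to $\lambda \le 1/\lambda^{1*}_k$, and to verify that the positivity of the multipliers used in LARAC guarantees that the rescaling argument is legitimate (so that the shortest paths under the two equivalent cost functions do coincide). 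Everything else follows immediately from Claim~\ref{claim1}.
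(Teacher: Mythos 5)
Your proof is correct and follows essentially the same route as the paper: apply Claim~\ref{claim1} once for the resource-2 constraint (giving $\lambda \geq \lambda^{2*}_k$) and once for the resource-1 constraint after observing that the cost $r_a^2 + \mu r_a^1$ is a positive rescaling of $r_a^1 + \tfrac{1}{\mu} r_a^2$, so the same shortest path is produced and the bound flips to $\lambda \leq 1/\lambda^{1*}_k$. Your treatment of the $\mu \leftrightarrow 1/\lambda$ change of variable is in fact slightly more explicit than the paper's, but the argument is the same.
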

\begin{proof}
Let us consider the shortest path $p$ where link costs $r_a^2 +\lambda^{1*}_{k} r_a^1, \forall a \in A$ are considered.
The path $p$ is also the shortest path in the graph where link costs are $r_a^1+\frac{1}{\lambda^{1*}_{k}}r_a^2, \forall a \in A$. Indeed, for each $a \in A$, dividing $r_a^2 +\lambda^{1*}_{k} r_a^1$ by $\lambda^{1*}_{k}$ induced $r_a^1+\frac{1}{\lambda^{1*}_{k}}r_a^2$.
Thus, thanks to the claim \ref{claim1}, all $\lambda^{2}_k\geq\lambda^{2*}_{k}$ induced a shortest path where resource 2 is respected and all $\frac{1}{\lambda^{1}_k}\leq \frac{1}{\lambda^{1*}_{k}}$ induced a shortest path where resource 1 is respected. Thus all $\lambda_k$ such that $\lambda^{2*}_{k}\leq \lambda_k \leq \frac{1}{\lambda^{1*}_{k}}$ are feasible for the two resources. \end{proof}

For a given multiplier $\lambda$, we may have several paths that are the shortest, but one of them can be unfeasible for resource 1 or 2.
To address this issue, we modified the Dijkstra algorithm to find all shortest paths between two nodes in a graph and we slightly changed the $\lambda$ value by adding an $\epsilon$ until all shortest paths are valid or only one path is the shortest.

\begin{figure}[h]
    \includegraphics[page = 2, width=\linewidth]{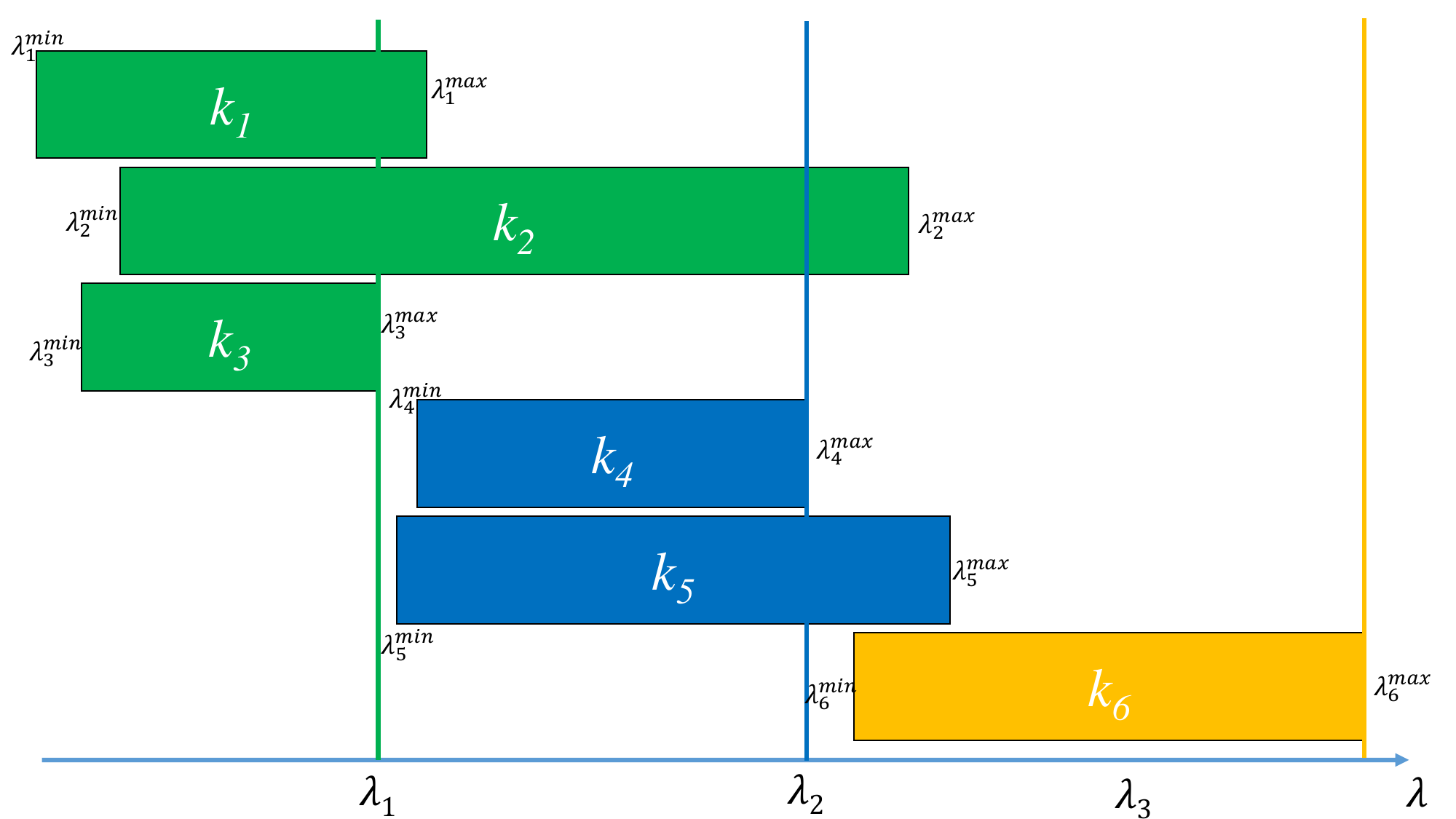}
    \caption{Example of non shortest paths uniqueness with $\lambda=13$.}\label{non-uniquement}
\end{figure}

In the following, for a given demand $k$, we denote $\lambda_k^{\min}=\lambda^{2*}_{k}$ and $\lambda_k^{\max}=\frac{1}{\lambda^{1*}_{k}}$.
Thanks to the convergence of LARAC algorithm \cite{doi:10.1080/09728600.2005.12088797}, $\lambda_k^{\min}$ is minimum for each demand $k\in K$ and $\lambda_k^{\max}$ is maximum for each demand $k\in K$.

From Proposition~\ref{prop1}, we can derive an efficient polynomial algorithm to solve the weight design problem for vMTR.
As a pre-processing step, we compute the multipliers $\lambda_k^{\min}$ and $\lambda_k^{\max}$ for each demand $k\in K$. Each demand where $\lambda_k^{\min}>\lambda_k^{\max}$ is discarded as it cannot be routed with vMTR. At the end of the procedure, the pool of discarded demand is processed with the MTR algorithm (Sec.~\ref{sec:MTRalg}). The overall goal is to minimize the number of MTR topologies, as they generate LSAs and do not adapt to evolving network conditions.

\textbf{Exact algorithm.} The goal of this algorithm is to find a set of multipliers $\Lambda$ such that all demands are covered. A multiplier $\lambda$ covers a demand $k$ if $\lambda$ is feasible for the demand (see Proposition 1). At each step, we find a new multiplier to add in $\Lambda$ such that some uncovered demands become covered with the new multiplier. More formally. Let $\bar{K}$ be the set of demands covered at the current step, and respectively, $K\setminus \bar{K}$  the set of uncovered demands. 

Algo.~\ref{alg:three} describes the whole algorithm and an example is given in Fig.~\ref{lambdaComputation}. First, the algorithm orders the $\lambda_k^{\max}$ in increasing order. Then, at each step, it selects a $\lambda_{k^*}^{\max}$ to add in $\Lambda$ such that all uncovered commodities have an associated $\lambda_k^{\max}$ greater or equal to the selected $\lambda_{k^*}^{\max}$. It continues until all demands are covered.

\begin{figure}[t]
    \includegraphics[page = 1, width=0.9\linewidth]{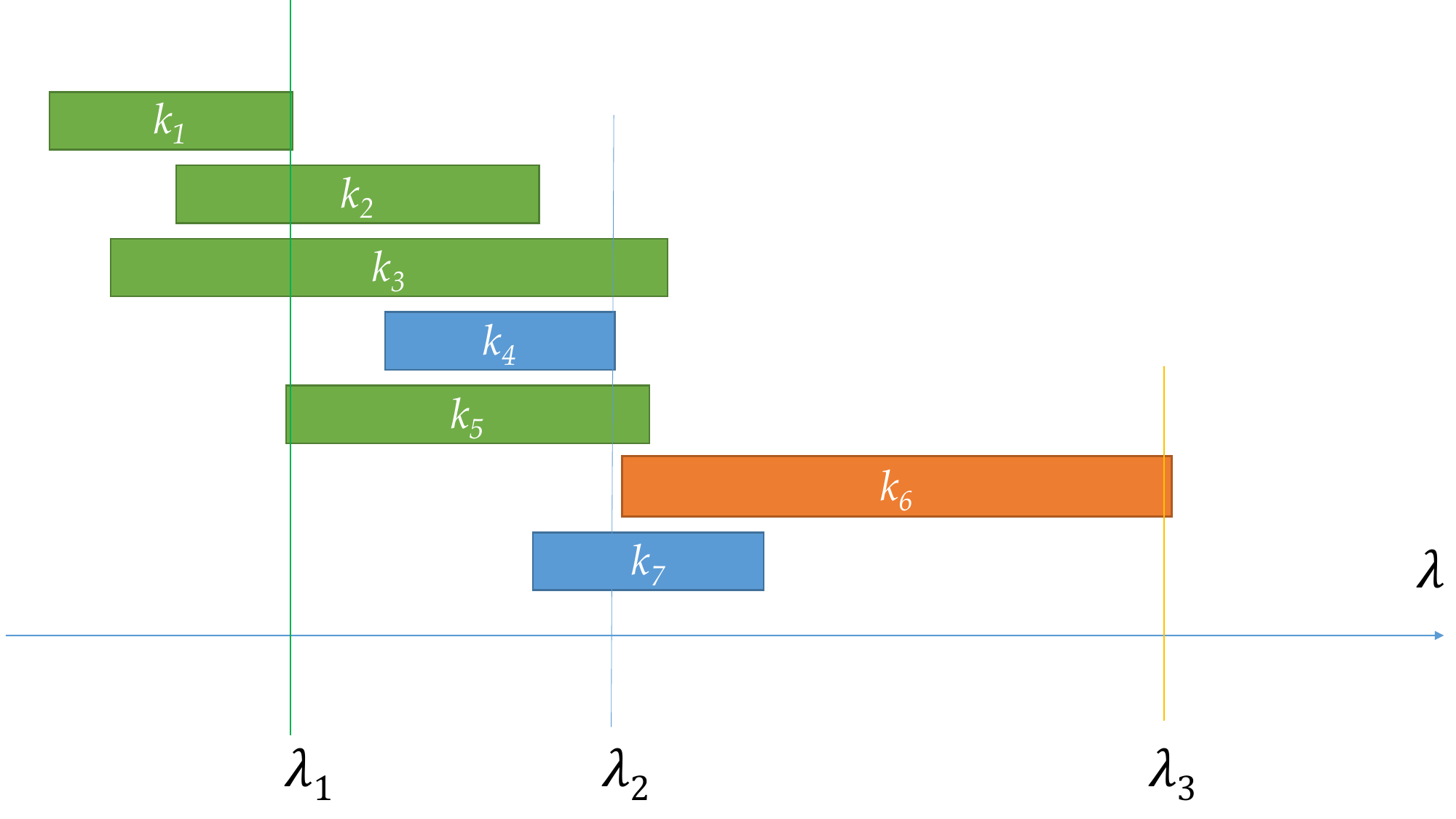}
    \caption{Example with $6$ demands.  Intervals between $\lambda_{k}^{\min}$ and $\lambda_{k}^{\max}$ are represented for each demand $k$. The first multiplier $\lambda_1$ covers the green demands, the second multiplier $\lambda_2$ covers the blue demands and $\lambda_3$ covers the orange demand.\label{lambdaComputation}}
        \vspace{-0.4cm}
\end{figure}

\begin{algorithm}[t]
\small
    \caption{Weight design algorithm for vMTR}\label{alg:three}
    \begin{algorithmic}[1]
        \Function{}{$G=(V,A), K$}
        \State{$\bar{K}=\emptyset$, $\Lambda=\emptyset$}
        \State{$\sigma$ ordering of demands $k\in K$ in increasing order of $\lambda_k^{\max}$}
        \While{$K\setminus \bar{K}\neq \emptyset$}
        \State{$\lambda^*$ the smallest $\lambda_k^{\max}$ such that $k \in K\setminus \bar{K}$}
        \State{add in $\bar{K}$ all demands covered by $\lambda^*$ not in $\bar{K}$}
        \State{add $\lambda^*$ in $\Lambda$ }
        \EndWhile
        \State{\Return{$\Lambda$}}
        \EndFunction
    \end{algorithmic}
\end{algorithm}

\textbf{Complexity and generalization.} To summarize the approach above, the problem is solved in two steps. First, we define the space of valid multipliers for each demand. Second, we search the minimum number of multiplier vectors such that all demands are covered.

The dimension of the space of valid multipliers is equal to $|T_Q|-1$. For the case with $2$ basic topologies, we have only 1 dimension, which can be represented by an interval graph and the problem is polynomial as presented above. If we consider $3$ basic topologies, we have 2 dimensions that can be represented by a plane. In this case, we can consider that the area of valid multipliers, vector of size 2, represents a rectangle, which is a particular case of our problem. This kind of graph is called boxicity 2-graph. For the interval graph or a boxicity 2-graph, the selection of the minimum number of multiplier vectors consists in solving a clique cover problem in the graph $H=(K,I)$ where $K$ is the set of demands and $I$ is the set of links. Two demands are linked if and only if they share a common valid multiplier vector. This problem is NP-hard for a boxicity 2-graph which is a particular case of the problem when 3 basic topologies are considered and polynomial for an interval graph \cite{zbMATH03874636}.

\section{Performance evaluation}
\label{sec:results}

We now compare MTR and vMTR over a set of instances generated from SNDlib~\cite{orlowski2010sndlib}.

\textbf{Description of instances}.
These instances contain the geographical coordinates for each node and the link capacities between them.
We selected 15 large topologies.
For each network, we have considered two resources, the loss and the delay.
For each link, the packet loss is set to be inversely proportional to the link capacity and the delay is considered to be proportional to the distance.
To avoid trivial instances, we generated all demands between each pair of routers.
We set the maximum end-to-end loss (resp. delay) to one of the shortest path on the loss (resp. delay) minus $\epsilon$.
This ensures that the demands cannot be routed on basic topologies, i.e., loss or delay.
For each demand, we solved optimally, i.e., not using LARAC, the constraint shortest path problem to ensure that a solution exists.
If no solution exists for a demand then we did not consider it.
All instances are available at~\cite{huin_2024}.

\begin{figure*}[t]
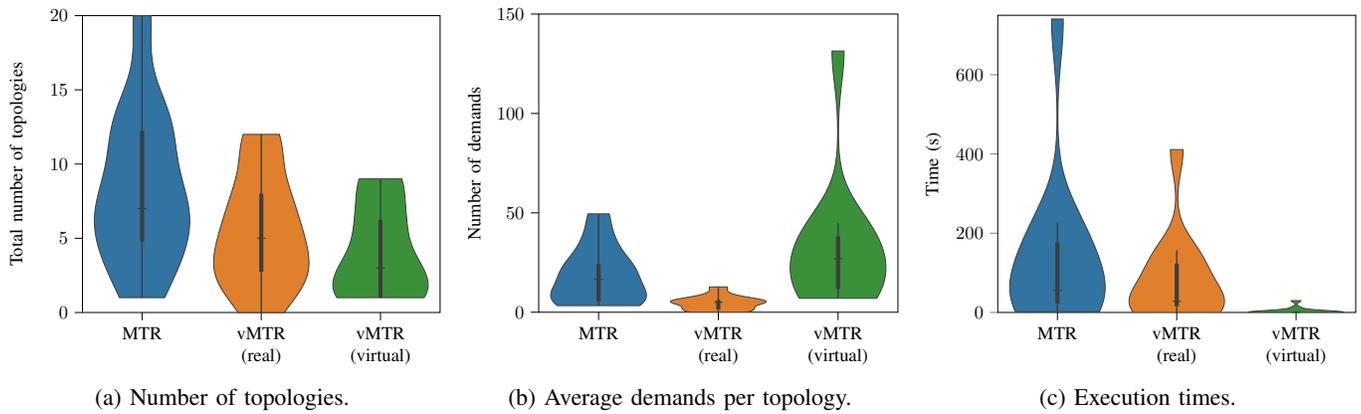

    \begin{subfigure}[t]{.32\textwidth}
        \centering
        \includestandalone[width=\linewidth]{figures/nb_topologies}
        \caption{Number of topologies.}\label{fig:nb_topologies}
    \end{subfigure}
    \hfil
    \begin{subfigure}[t]{.32\textwidth}
        \centering
        \includestandalone[width=\linewidth]{figures/average_demands_per_topo}
        \caption{Average demands per topology.}\label{fig:average_demands}
    \end{subfigure}
    \hfil
    \begin{subfigure}[t]{.32\textwidth}
        \centering
        \includestandalone[width=\linewidth]{figures/execution_time}
        \caption{Execution times.}\label{fig:execution-time}
    \end{subfigure}
    \caption{Comparison of MTR and vMTR using violin plots.}\label{fig:results}
\end{figure*}

\textbf{Numerical results}. Fig.~\ref{fig:results} presents violin plots to summarize results over all 15 instances. We can see, in Fig.~\ref{fig:nb_topologies}, that vMTR reduces the number of real topologies compared to MTR, from an average of 8.41 to 5.41.
It also reduces the maximum number of needed topologies from 20 to 12, and it also avoid the need for any real topology in some cases.
vMTR requires an average of 4.08 virtual topologies.
The decrease in the number of real topologies can be easily explained by looking at the number of average demands per topologies in Fig.~\ref{fig:average_demands}.
vMTR virtual topologies can accept more demands than MTR real ones: in average, a vMTR topology holds 32.7 demands while, in average, MTR topologies hold 18.7 demands.
In some cases, we can even have virtual topologies with 131.4 demands.
However, real topologies in vMTR hold less demands because the remaining demands not covered by a virtual topology are harder to cover with the greedy MTR algorithm.
Recall that as our implementation of vMTR with a linear combination of real topologies is based on the same principle as LARAC, which is a heuristic solution to the CSP problem, some demands cannot be satisfied with virtual topologies and real ones need to be added to ensure 100\% traffic acceptance.

Fig.~\ref{fig:execution-time} shows the execution times for both solutions. For vMTR, we present the time required to compute lambda multipliers for virtual topologies as well as the time needed by the greedy algorithm to provision the remaining demands over real topologies. vMTR, once again, shows an improvement over MTR with an average time of 86.2s over 141.3s, and it also decreases the maximum time from 740.5s to 411.1s. The computational time to compute only the virtual topologies is really small. It is due to the polynomiality of the algorithm.

\begin{figure}[b]
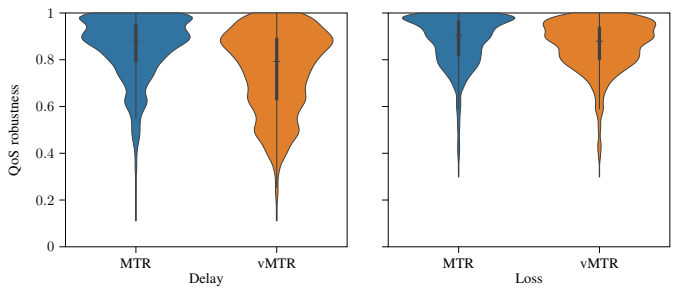

    \centering
    \includestandalone[width=\linewidth]{figures/qos-robustness}
    \caption{Violin plots of the QoS robustness which is, for a demand and a metric, the ratio between the end-to-end QoS of the provided paths and the requested QoS. A value of 1 for a given metric means that the path of a demand is at its maximum resource consumption.}\label{fig:qos_robustness}
\end{figure}

Finally, we study the robustness of the two approaches against changes of QoS metrics.
To this end, we define the \textit{QoS robustness} of a demand, for a given QoS metric, as the ratio of its provisioned path QoS over its end-to-end QoS requirement. A QoS robustness of 1 means that the provided path QoS equals the required QoS, and any increase of this QoS metric in the path's link would invalidate the path. While we can see in Fig.~\ref{fig:qos_robustness} that both solutions provision demands on strict paths (ratio of 1), on average, vMTR provides paths that are more robust to QoS changes than MTR (0.75 and 0.79 for vMTR vs 0.85 and 0.88 for MTR, for delay and losses, respectively). Remark that the robustness is slightly better from the delay point of view if we compare with the Loss. This is due to the distribution of metric values in our instances. For the two solutions, we have a better maximum robustness for the delay than for the loss. We then have more flexibility to improve the robustness of the delay.

\section{Conclusion}\label{sec:conclusion}

We introduced a new IGP extension, called virtual MTR (vMTR), to dynamically maintain a set of virtual routing topologies. This solution aims at increasing routing flexibility in multi-topology routing to follow evolving network conditions and support heterogeneous QoS requirements while reducing the overhead from LSAs and weight updates. We presented a polynomial and exact algorithm for vMTR when virtual topologies are derived from two real topologies and a local search algorithm for MTR as benchmark. We showed that vMTR can help reducing the number of real topologies and is more robust to QoS changes.

Future work along these lines include several directions. First, we may propose an efficient algorithm for the design of MTR topologies jointly with the creation of virtual ones in vMTR. We can also design lambda multipliers considering traffic scenarios or uncertainty sets on traffic. Also of interest, we may consider non additive and non linear ways to combine real topologies. Finally, we may develop efficient algorithms for the hard case where more than two real topologies are used to derive virtual ones.

\bibliographystyle{IEEEtran}
\bibliography{biblio}

\begin{thebibliography}{10}
\providecommand{\url}[1]{#1}
\csname url@samestyle\endcsname
\providecommand{\newblock}{\relax}
\providecommand{\bibinfo}[2]{#2}
\providecommand{\BIBentrySTDinterwordspacing}{\spaceskip=0pt\relax}
\providecommand{\BIBentryALTinterwordstretchfactor}{4}
\providecommand{\BIBentryALTinterwordspacing}{\spaceskip=\fontdimen2\font plus
\BIBentryALTinterwordstretchfactor\fontdimen3\font minus
  \fontdimen4\font\relax}
\providecommand{\BIBforeignlanguage}[2]{{%
\expandafter\ifx\csname l@#1\endcsname\relax
\typeout{** WARNING: IEEEtran.bst: No hyphenation pattern has been}%
\typeout{** loaded for the language `#1'. Using the pattern for}%
\typeout{** the default language instead.}%
\else
\language=\csname l@#1\endcsname
\fi
#2}}
\providecommand{\BIBdecl}{\relax}
\BIBdecl

\bibitem{Wang:TE}
N.~{Wang}, K.~H. {Ho}, G.~{Pavlou}, and M.~{Howarth}, ``An overview of routing
  optimization for internet traffic engineering,'' \emph{IEEE Communications
  Surveys Tutorials}, vol.~10, no.~1, pp. 36--56, 2008.

\bibitem{ventre2020segment}
P.~L. Ventre, S.~Salsano, M.~Polverini, A.~Cianfrani, A.~Abdelsalam,
  C.~Filsfils, P.~Camarillo, and F.~Clad, ``Segment routing: a comprehensive
  survey of research activities, standardization efforts, and implementation
  results,'' \emph{IEEE Communications Surveys \& Tutorials}, vol.~23, no.~1,
  pp. 182--221, 2020.

\bibitem{wu2022comprehensive}
D.~Wu and L.~Cui, ``A comprehensive survey on segment routing traffic
  engineering,'' \emph{Digital Communications and Networks}, 2022.

\bibitem{psenak2007multi}
P.~Psenak, S.~Mirtorabi, A.~Roy, L.~Nguyen, and P.~Pillay-Esnault,
  ``{Multi-topology (MT) routing in OSPF. IETF RRC 4915.}'' Tech. Rep., 2007.

\bibitem{przygienda2008m}
T.~Przygienda, N.~Shen, and N.~Sheth, ``{M-ISIS: multi topology (MT) routing in
  Intermediate System to Intermediate Systems (IS-ISs). IETF RFC 5120.}'' Tech.
  Rep., 2008.

\bibitem{Flexalgo}
P.~Psenak, S.~Hegde, C.~Filsfils, K.~Talaulikar, and A.~Gulko, ``{IGP Flexible
  Algorithm},'' RFC 9350, Feb. 2023.

\bibitem{huin2021network}
N.~Huin, S.~Martin, J.~m. Leguay, and S.~Cai, ``Network slicing with
  multi-topology routing,'' in \emph{2021 IFIP Networking Conference (IFIP
  Networking)}.\hskip 1em plus 0.5em minus 0.4em\relax IEEE, 2021, pp. 1--2.

\bibitem{tian2020traffic}
Y.~Tian, Z.~Wang, X.~Yin, X.~Shi, Y.~Guo, H.~Geng, and J.~Yang, ``Traffic
  engineering in partially deployed segment routing over ipv6 network with deep
  reinforcement learning,'' \emph{IEEE/ACM Transactions on Networking},
  vol.~28, no.~4, pp. 1573--1586, 2020.

\bibitem{moy1998ospf}
J.~T. Moy, \emph{{OSPF: anatomy of an Internet routing protocol}}.\hskip 1em
  plus 0.5em minus 0.4em\relax Addison-Wesley Professional, 1998.

\bibitem{fortz2000internet}
B.~Fortz and M.~Thorup, ``{Internet traffic engineering by optimizing OSPF
  weights},'' in \emph{Proc. IEEE INFOCOM}, vol.~2, 2000, pp. 519--528.

\bibitem{juttner2001lagrange}
A.~Juttner, B.~Szviatovski, I.~M{\'e}cs, and Z.~Rajk{\'o}, ``Lagrange
  relaxation based method for the qos routing problem,'' in \emph{Proc. IEEE
  INFOCOM}, vol.~2, 2001, pp. 859--868.

\bibitem{xiao2005gen}
Y.~Xiao, K.~Thulasiraman, and G.~Xue, ``Gen-larac: A generalized approach to
  the constrained shortest path problem under multiple additive constraints,''
  in \emph{International Symposium on Algorithms and Computation}.\hskip 1em
  plus 0.5em minus 0.4em\relax Springer, 2005, pp. 92--105.

\bibitem{bley2008routing}
A.~Bley, ``{Routing and capacity optimization for IP networks},'' in
  \emph{Operations Research Proceedings 2007}.\hskip 1em plus 0.5em minus
  0.4em\relax Springer, 2008, pp. 9--16.

\bibitem{fortz2002optimizing}
B.~Fortz and M.~Thorup, ``{Optimizing OSPF/IS-IS weights in a changing
  world},'' \emph{IEEE journal on selected areas in communications}, vol.~20,
  no.~4, pp. 756--767, 2002.

\bibitem{ericsson2002genetic}
M.~Ericsson, M.~G.~C. Resende, and P.~M. Pardalos, ``{A genetic algorithm for
  the weight setting problem in OSPF routing },'' \emph{Journal of
  combinatorial optimization}, vol.~6, no.~3, pp. 299--333, 2002.

\bibitem{buriol2005hybrid}
L.~S. Buriol, M.~G. Resende, C.~C. Ribeiro, and M.~Thorup, ``{A hybrid genetic
  algorithm for the weight setting problem in OSPF/IS-IS routing},''
  \emph{Networks: An International Journal}, vol.~46, no.~1, pp. 36--56, 2005.

\bibitem{codit2023}
A.~Benhamiche, M.~Chopin, and S.~Martin, ``Unsplittable shortest path routing:
  Extended model and matheuristic,'' in \emph{Proc. CoDIT}, 2023.

\bibitem{capone2013ospf}
A.~Capone, C.~Cascone, L.~G. Gianoli, and B.~Sanso, \emph{{OSPF optimization
  via dynamic network management for green IP networks}}.\hskip 1em plus 0.5em
  minus 0.4em\relax IEEE, 2013.

\bibitem{amaldi2013energy}
E.~Amaldi, A.~Capone, and L.~G. Gianoli, ``{Energy-aware IP traffic engineering
  with shortest path routing},'' \emph{Computer Networks}, vol.~57, no.~6, pp.
  1503--1517, 2013.

\bibitem{fortz2003robust}
B.~Fortz, M.~Thorup \emph{et~al.}, ``{Robust optimization of OSPF/IS-IS
  weights},'' in \emph{Proc. INoC}, vol.~20, 2003, pp. 225--230.

\bibitem{vallet2014online}
J.~Vallet and O.~Brun, ``{Online OSPF weights optimization in IP networks},''
  \emph{Computer Networks}, vol.~60, pp. 1--12, 2014.

\bibitem{ranaweera2012preventive}
R.~S. Ranaweera, I.~M. Kamrul, and E.~Oki, ``{Preventive start-time
  optimization of OSPF link weights against link failure for hose model},'' in
  \emph{Asia-Pacific Conference on Communications (APCC)}.\hskip 1em plus 0.5em
  minus 0.4em\relax IEEE, 2012, pp. 698--702.

\bibitem{altin2012oblivious}
A.~Altin, B.~Fortz, and H.~{\"U}mit, ``{Oblivious OSPF routing with weight
  optimization under polyhedral demand uncertainty},'' \emph{Networks},
  vol.~60, no.~2, pp. 132--139, 2012.

\bibitem{Duffield}
N.~G. Duffield, P.~Goyal, A.~Greenberg, P.~Mishra, K.~K. Ramakrishnan, and
  J.~E. van~der Merive, ``A flexible model for resource management in virtual
  private networks,'' \emph{SIGCOMM Comput. Commun. Rev.}, vol.~29, no.~4, p.
  95–108, Aug. 1999.

\bibitem{moulierac2015optimizing}
J.~Moulierac and T.~K. Phan, ``{Optimizing IGP link weights for
  energy-efficiency in multi-period traffic matrices},'' \emph{Computer
  Communications}, vol.~61, pp. 79--89, 2015.

\bibitem{kodialam2022network}
M.~Kodialam and T.~Lakshman, ``Network link weight setting: A machine learning
  based approach,'' in \emph{Proc. IEEE INFOCOM}, 2022, pp. 2048--2057.

\bibitem{wang2008adaptive}
N.~Wang, K.-H. Ho, and G.~Pavlou, ``{Adaptive multi-topology IGP based traffic
  engineering with near-optimal network performance},'' in \emph{IFIP
  NETWORKING}.\hskip 1em plus 0.5em minus 0.4em\relax Springer, 2008, pp.
  654--666.

\bibitem{wang2012ample}
N.~Wang, K.~H. Ho, and G.~Pavlou, ``Ample: an adaptive traffic engineering
  system based on virtual routing topologies,'' \emph{IEEE Communications
  Magazine}, vol.~50, no.~3, pp. 185--191, 2012.

\bibitem{mirzamany2014efficient}
E.~Mirzamany, A.~Lasebae, and O.~Gemikonakli, ``An efficient traffic
  engineering based on multi-topology routing for future internet,''
  \emph{Computer Networks}, vol.~70, pp. 170--178, 2014.

\bibitem{wang2009robust}
X.~Wang, S.~Wang, and L.~Li, ``Robust traffic engineering using multi-topology
  routing,'' in \emph{GLOBECOM 2009-2009 IEEE Global Telecommunications
  Conference}.\hskip 1em plus 0.5em minus 0.4em\relax IEEE, 2009, pp. 1--6.

\bibitem{1563878}
M.~Menth and R.~Martin, ``Network resilience through multi-topology routing,''
  in \emph{DRCN}, 2005, pp. 271--277.

\bibitem{garg2021dynamizing}
D.~Garg \emph{et~al.}, ``Dynamizing dijkstra: A solution to dynamic shortest
  path problem through retroactive priority queue,'' \emph{Journal of King Saud
  University-Computer and Information Sciences}, vol.~33, no.~3, pp. 364--373,
  2021.

\bibitem{de2000tamcra}
H.~De~Neve and P.~Van~Mieghem, ``Tamcra: a tunable accuracy multiple
  constraints routing algorithm,'' \emph{Computer Communications}, vol.~23,
  no.~7, pp. 667--679, 2000.

\bibitem{doi:10.1080/09728600.2005.12088797}
X.~Ying, T.~Krishnaiyan, X.~Guoliang, A.~Jüttner, and S.~Arumugam, ``The
  constrained shortest path problem,'' \emph{AKCE International Journal of
  Graphs and Combinatorics}, vol.~2, no.~2, pp. 63--86, 2005.

\bibitem{zbMATH03874636}
H.~Imai and T.~Asano, ``\BIBforeignlanguage{English}{Finding the connected
  components and a maximum clique of an intersection graph of rectangles in the
  plane},'' \emph{\BIBforeignlanguage{English}{J. Algorithms}}, vol.~4, pp.
  310--323, 1983.

\bibitem{orlowski2010sndlib}
S.~Orlowski, R.~Wess{\"a}ly, M.~Pi{\'o}ro, and A.~Tomaszewski, ``{SNDlib
  1.0—Survivable network design library},'' \emph{Networks: An International
  Journal}, vol.~55, no.~3, pp. 276--286, 2010.

\bibitem{huin_2024}
\BIBentryALTinterwordspacing
N.~Huin, S.~Martin, and J.~Leguay, ``Loss/delay instances for {vIGP},'' Jan.
  2024. [Online]. Available: \url{https://doi.org/10.5281/zenodo.10469172}
\BIBentrySTDinterwordspacing

\end{thebibliography}

\end{document}